\documentclass[11pt,a4paper]{article}

\usepackage[utf8]{inputenc}
\usepackage[T1]{fontenc}
\usepackage{lmodern}
\usepackage{amsmath,amssymb,amsthm}
\usepackage{mathtools}
\usepackage{forest} 
\usepackage{enumitem}
\usepackage{hyperref}
\usepackage{microtype}
\usepackage{geometry}

\usepackage{algorithm}
\usepackage{algpseudocode}
\usepackage{float}

\usepackage{authblk}
\usepackage{orcidlink} 

\title{Templated Assembly Theory:\\
An Extension of the Canonical Assembly Index with Block-Compressed Templates}
\author{Piotr Masierak\orcidlink{0009-0008-9895-5329}}
\affil{Łukaszyk Patent Attorneys, ul. Głowackiego 8, 40-052 Katowice, Poland\\
\texttt{pmasierak@patent.pl}}
\date{January 26, 2026}

\theoremstyle{definition}
\newtheorem{definition}{Definition}[section]
\theoremstyle{plain}

\newtheorem{lemma}[definition]{Lemma}
\newtheorem{proposition}[definition]{Proposition}
\theoremstyle{remark}
\newtheorem{remark}[definition]{Remark}

\newcommand{\ASI}{\mathrm{ASI}}
\newcommand{\TAI}{\mathrm{TAI}}

\begin{document}

\maketitle

\begin{abstract}
Assembly Theory, as developed by Cronin and co-workers, assigns to an object an assembly index: the minimal number of binary join operations required to build at least one copy of the object from a specified set of basic building blocks, allowing reuse of intermediate components.
For strings over a finite alphabet, the canonical assembly index can be defined in the free semigroup $(\Sigma^{+},\cdot)$ with universal binary concatenation and a ``no-trash'' condition, and its exact computation has been shown to be NP-complete.
In this paper we propose an extension of the canonical, string-based formulation which augments pure concatenation with templated assembly steps.
Intermediate objects may contain a distinguished wildcard symbol $\ast$ that represents a compressible block.
Templates are restricted to block-compressed substrings of the target string and can be instantiated by inserting previously assembled motifs into one or many wildcard positions, possibly in parallel.
This yields a new complexity measure, the templated assembly index, which strictly generalises the canonical index while preserving its operational character.
We formalise the model, clarify its relation to the canonical assembly index and to classical problems such as the smallest grammar problem, and discuss the computational complexity of determining the templated assembly index.
Finally, we sketch potential applications in sequence analysis, modularity detection, and biosignature design.\\
\noindent\textbf{Keywords:} assembly theory, assembly index,  information theory, complexity measures
\end{abstract}

\section{Introduction}

Assembly Theory (AT) aims to quantify the historical contingency and structural complexity of objects by counting the minimal number of composition steps required to build them from simple components, under a specified set of admissible joins and with reuse of previously assembled substructures 
~\cite{cronin2017,cronin2022,cronin2023}.
In its string-based form, one considers the free semigroup $(\Sigma^{+},\cdot)$ over a finite alphabet $\Sigma$, with binary concatenation as the only join operation.
The canonical assembly index of a string $w\in\Sigma^{+}$ is then the minimal length of an assembly plan that produces at least one copy of $w$, using only substrings of $w$ as intermediate objects and charging unit cost per join.
This canonical, ``no-trash'' variant captures a universal lower bound for more operational, domain-specific versions of Assembly Theory.\\

In ~\cite{lukaszyk_assembly_2024} it was conjectured and in ~\cite{masierak_computational_2025} it was shown that computing the 
assembly index of a given string---or even deciding whether the index is at most a given bound $k$---is NP-complete, via a reduction from the smallest grammar problem and a correspondence between assembly plans and straight-line programs.
From a structural perspective, the canonical index is sensitive to hierarchical reuse of substrings, especially repeated motifs that can be assembled once and then copied in subsequent joins.\\

However, many natural systems exhibit forms of templated or pattern-based assembly that are not fully captured by pure concatenation.
For instance, a fixed molecular scaffold may contain multiple sites where the same functional group is attached, or a regulatory DNA sequence may contain recurrent binding motifs embedded in varying local contexts.
From the perspective of string assembly, these phenomena resemble the use of templates: patterns with holes (wildcard positions) that are subsequently filled by previously assembled motifs, possibly at many positions in parallel.\\

The central aim of this paper is to formalise this intuition in a way that remains faithful to the operational ethos of Assembly Theory.
We introduce a new variant of the canonical string-based model in which intermediate objects may contain a special symbol $\ast$ denoting a placeholder for an arbitrary non-empty string.
Crucially, we restrict templates to those that arise as block-compressed substrings of the target string:
roughly, a template is obtained by taking a substring $u$ of the target $w$ and replacing some contiguous blocks of $u$ by $\ast$, thereby compressing entire blocks into a single wildcard.
Assembly operations now include not only concatenation, but also template instantiation steps, where a previously available motif $x$ is substituted for one or all occurrences of $\ast$ in a template $T$, in a single step.\\

This yields a new complexity measure, the templated assembly index $\TAI(w)$.
By design, the canonical index $\ASI(w)$ appears as the special case where wildcard templates are never used, and we show that $\TAI(w)\le \ASI(w)$ for all $w$.
At the same time, $\TAI(w)$ can be strictly smaller than $\ASI(w)$ for strings that exhibit non-local, templated reuse of motifs, making $\TAI$ sensitive to a different aspect of structure---which we term templated modularity.\\

\section{Background: canonical string-based Assembly Theory}

We briefly recall the canonical string-based formulation of Assembly Theory and fix notation.
Let $\Sigma$ be a finite alphabet and let $\Sigma^{+}$ denote the set of all non-empty strings over $\Sigma$.
We consider the free semigroup $(\Sigma^{+},\cdot)$ with concatenation as the binary operation.

\begin{definition}
Let $w\in\Sigma^{+}$.
The canonical assembly space for $w$ is the pair
\[
\mathcal{A}(w) = (O,R),
\]
where:
\begin{itemize}
    \item $O\subseteq \Sigma^{+}$ is the set of objects, taken to be the set $\mathrm{Sub}(w)$ of all non-empty substrings of $w$;
    \item $R$ is the set of binary join rules of the form
    \[
    x \circ y \to xy
    \]
    for all $x,y\in O$ such that $xy\in O$.
\end{itemize}
\end{definition}

Informally, $O$ contains exactly the substrings of the target $w$, and we may join two substrings $x,y$ in a single assembly step whenever their concatenation $xy$ is again a substring of $w$.
Intermediate objects are thus constrained by a no-trash condition: every object appearing in any assembly plan is a substring of the target $w$.

\begin{definition}
Let $w\in\Sigma^{+}$ and $\mathcal{A}(w)=(O,R)$ as above.
An assembly plan for $w$ is a sequence
\[
x_1, x_2, \dots, x_m
\]
of objects in $O$ such that:
\begin{enumerate}[label=(\alph*)]
    \item each $x_i$ is either a single letter in $\Sigma$ or is obtained from two earlier objects by a rule in $R$, i.e.\ there exist $j,k<i$ with $x_i = x_j x_k$ and $x_j x_k \in O$;
    \item at least one occurrence of $w$ appears among the $x_i$.
\end{enumerate}
The length of the assembly plan is the number of composite steps, i.e.\ the number of indices $i$ for which $x_i$ is obtained by some join rule in $R$.
The canonical assembly index of $w$, denoted $\ASI(w)$, is the minimal length of an assembly plan for $w$.
\end{definition}

This formulation captures the intuition that we start from monomeric building blocks (single letters) and in each step join two previously available objects, paying unit cost for each join, until we have built the target $w$ at least once.
Crucially, the reuse principle is built in: once an object $x$ has appeared in the assembly plan, it is available for use in arbitrary many subsequent joins, effectively at zero marginal cost.

\section{Templated assembly spaces}

We now introduce a templated extension of the canonical assembly space model.
Throughout this section we fix a finite alphabet $\Sigma$ and a distinguished wildcard symbol $\ast\notin\Sigma$.\\

Intuitively, a template is obtained from a substring of the target word by compressing whole contiguous blocks into a single wildcard symbol $\ast$.
The wildcard positions will later be instantiated by inserting previously assembled motifs.

\begin{definition}
Let $w\in\Sigma^{+}$.
A string $T\in(\Sigma\cup\{\ast\})^{+}$ is called a block-compressed template for $w$ if there exist indices $0\le i\le j<|w|$ and a partition
\[
0 = b_0 < b_1 < \dots < b_k = L
\]
of the positions of the substring $u = w[i..j]$ of length $L = j-i+1$ into contiguous blocks
\[
S_r = u[b_r..b_{r+1}-1] \in \Sigma^{+},\quad r=0,\dots,k-1,
\]
together with a bit vector $F\in\{0,1\}^k$ such that:
\begin{itemize}
    \item $F$ is neither all zeros nor all ones;
    \item $T$ is obtained by replacing each block $S_r$ by
    \[
    T_r =
    \begin{cases}
    S_r & \text{if } F[r] = 0,\\
    \ast & \text{if } F[r] = 1,
    \end{cases}
    \]
    and setting $T = T_0 T_1 \cdots T_{k-1}$.
\end{itemize}
We write $\mathcal{T}(w)$ for the set of all block-compressed templates for $w$.
\end{definition}

Note that templates are not assumed to be given independently of the input: for a fixed target string $w$ the only admissible templates are those in $\mathcal{T}(w)$ (i.e.\ mined from substrings of $w$ by block-compression), and any such template can be used in an assembly plan only after it has been explicitly constructed from monomers via rules in $R_{\mathrm{concat}}$ (possibly reusing previously assembled objects).

Thus, $T$ is obtained from some contiguous substring $u$ of $w$ by compressing one or more entire blocks into a single wildcard; at least one block remains literal, and at least one block is compressed.
The wildcard symbol $\ast$ can be thought of as a ``hole'' that marks a region where an arbitrary non-empty string may later be inserted.

Note that block boundaries are arbitrary, so compressed blocks may have length~1. In particular, templates can encode ``single-site'' variability: a wildcard position may correspond to a single symbol in the underlying substring, and instantiating it with different monomers in $\Sigma$ captures patterns that
differ only at specific positions.

\begin{remark}
The restriction to block-compressions of substrings enforces a templated version of the no-trash condition: every template $T$ is anchored in $w$ by its literal segments, which appear in the same order as in $w$.
In particular, any instantiation of $T$ by compressible blocks that themselves occur as substrings of $w$ yields again a substring of $w$.\\
\end{remark}

We now define the objects and operations in a templated assembly space associated with a target string $w$.

\begin{definition}
Let $w\in\Sigma^{+}$.
The templated assembly space for $w$ is the triple
\[
\mathcal{A}_{\ast}(w) = (O_{\ast},R_{\mathrm{concat}},R_{\mathrm{temp}}),
\]
where
\begin{itemize}
    \item $O_{\ast}$ is the set of objects, defined as
    \[
    O_{\ast} = \mathrm{Sub}(w)\; \cup\; \mathcal{T}(w).
    \]
    Thus every object is either a substring of $w$ (containing no wildcards) or a block-compressed template for $w$ (containing at least one wildcard and at least one literal segment).
    \item $R_{\mathrm{concat}}$ is the set of concatenation rules
    \[
    x \circ y \to xy
    \]
    for all $x,y\in O_{\ast}$ such that $xy\in O_{\ast}$.
    
    \item $R_{\mathrm{temp}}$ is the set of template instantiation rules.
    For a template $T\in\mathcal{T}(w)$ let $P(T)$ denote the set of positions of wildcards in $T$, i.e.
    \[
      P(T) \;=\; \{\, p : T[p] = \ast \,\}.
    \]
    For every non-empty subset $S \subseteq P(T)$ and every object $u\in O_{\ast}$
    such that $T[S\mapsto u]\in O_{\ast}$, we have a rule
    \[
      T \bullet_{S} u \;\to\; T[S \mapsto u],
    \]
    where $T[S\mapsto u]$ denotes the string obtained from $T$ by replacing each wildcard at a position in $S$ by $u$ and leaving all other positions unchanged.
    In particular, if $S=\{p\}$ is a singleton, we obtain a single-star substitution at position $p$, and if $S = P(T)$ we obtain a fully parallel substitution that replaces all wildcards in $T$ by $u$ simultaneously.
    As before, we explicitly require $T[S\mapsto u]\in O_{\ast}$; this is automatically satisfied whenever $u$ is a substring of $w$, but in general we impose it as part of the no-trash condition.
\end{itemize}
\end{definition}

\begin{remark}
Each rule application in $R_{\mathrm{concat}}\cup R_{\mathrm{temp}}$ has unit cost.

Templates are not external primitives. The only templates that may be used when assembling $w$ are those in $\mathcal{T}(w)$, i.e.\ templates mined from substrings of the target string (Definition above).
Moreover, a template $T\in\mathcal{T}(w)$ can appear in an assembly plan only after it has first been built using the usual binary concatenation rules in $R_{\mathrm{concat}}$ from monomers (letters in $\Sigma$ and the wildcard~$\ast$).
In particular, the construction of a template may reuse any objects that have already been constructed earlier in the same plan (including substrings and previously assembled templates).

Thus the total cost of ``constructing'' a template is exactly the number of concatenation steps used in its assembly, while each subsequent template instantiation step from $R_{\mathrm{temp}}$ is counted as a separate unit-cost operation.\\
\end{remark}

\begin{remark}
The instantiation rules $T\bullet_{S}u \to T[S\mapsto u]$ allow both single-star substitutions (when $|S|=1$) and fully parallel substitutions (when $S=P(T)$) as special cases. In this sense, the present templated model subsumes both the single-star and all-star variants.\\
\end{remark}

Thus, assembly operations now include both ordinary concatenation of objects and templated insertion of previously assembled motifs into block-compressed templates.
We stress that templates themselves are not introduced by a special rule: they are statically determined by $w$ via block-compression of substrings and are part of $O_{\ast}$ from the outset.
All dynamic operations are binary and of unit cost, consistent with the canonical model.\\

We next define assembly plans in the templated setting and the associated complexity measure.\\

\begin{definition}
Let $w\in\Sigma^{+}$ and $\mathcal{A}_{\ast}(w) = (O_{\ast},R_{\mathrm{concat}},R_{\mathrm{temp}})$.
A templated assembly plan for $w$ is a finite sequence
\[
x_1,x_2,\dots,x_m
\]
of objects in $O_{\ast}$ such that:
\begin{enumerate}[label=(\alph*)]
    \item each $x_i$ is either a monomer (a single letter in $\Sigma$ or the wildcard $\ast$) or is obtained by applying a rule in $R_{\mathrm{concat}}\cup R_{\mathrm{temp}}$ to earlier objects;
    \item at least one occurrence of $w$ appears among the $x_i$.
\end{enumerate}
The length of the assembly plan is the number of composite steps, i.e.\ the number of indices $i$ for which $x_i$ is obtained via a rule in $R_{\mathrm{concat}}\cup R_{\mathrm{temp}}$.\\
\end{definition}

We are now ready to define the templated assembly index.

\begin{definition}
Let $w\in\Sigma^{+}$.
The templated assembly index of $w$, denoted $\TAI(w)$, is the minimal length (number of composite steps) among all templated assembly plans for $w$ in $\mathcal{A}_{\ast}(w)$.
\end{definition}

By construction, we may always ignore templates and restrict to substrings and concatenation, yielding a canonical assembly plan.

\begin{proposition}
For all $w\in\Sigma^{+}$ we have
\[
\TAI(w) \;\le\; \ASI(w).
\]
Moreover, if $w$ admits no templated assembly plan that uses a rule in $R_{\mathrm{temp}}$, then $\TAI(w) = \ASI(w)$.
\end{proposition}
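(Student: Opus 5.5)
The plan is to treat the two claims separately, and in both cases to compare the feasible sets of plans directly.

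For the inequality, I take an optimal canonical assembly plan $x_1,\dots,x_m$ for $w$ in $\mathcal{A}(w)=(O,R)$ and check that it is already a templated assembly plan in $\mathcal{A}_{\ast}(w)$ with the same length. There are three points to verify.
\begin{enumerate}[label=(\roman*)]
\item Every $x_i$ lies in $O=\mathrm{Sub}(w)\subseteq O_{\ast}$.
\item Every monomer step in the canonical plan, i.e.\ a letter of $\Sigma$, is also a monomer in the templated sense.
\item Every join step $x_i=x_jx_k$ with $x_jx_k\in O$ is an instance of a rule in $R_{\mathrm{concat}}$, because $x_j,x_k\in O_{\ast}$ and $x_jx_k\in\mathrm{Sub}(w)\subseteq O_{\ast}$.
\end{enumerate}
The composite steps are exactly the join steps, so the lengths coincide. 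Since $w$ still appears in the sequence, this gives a templated plan of length $\ASI(w)$, and minimality yields $\TAI(w)\le\ASI(w)$.

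For the equality clause, I take an optimal templated plan $y_1,\dots,y_n$ with $n=\TAI(w)$. By hypothesis, no step of it applies a rule in $R_{\mathrm{temp}}$, so every composite step is a concatenation. I then delete all objects not used in the derivation of the chosen occurrence of $w$. These are the objects that are not ancestors of $w$ in the dependency DAG. Deletion keeps the sequence a valid plan, still without template rules, and does not increase its length.

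The key step is to show that no remaining object contains $\ast$. I would argue this by contradiction. Suppose some remaining object contains $\ast$, and consider the chain of ancestors from it down to $w$. Each concatenation preserves occurrences of $\ast$ as a factor: if $x$ or $y$ contains $\ast$, then so does $xy$. Hence $w$ would contain $\ast$, which contradicts $w\in\Sigma^{+}$.

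It follows that all remaining objects lie in $\Sigma^{+}\cap O_{\ast}=\mathrm{Sub}(w)$, all monomers are letters, and each step $x\circ y\to xy$ has $x,y,xy\in\mathrm{Sub}(w)$. So the pruned sequence is a canonical plan, and $\ASI(w)\le n=\TAI(w)$. Combined with the first part, this gives equality.

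The only subtle point, and the one I expect to be the main obstacle, is interpreting the hypothesis. ``Admits no templated plan using $R_{\mathrm{temp}}$'' must be read as ``every templated plan, in particular an optimal one, uses only concatenations.'' The wildcard-pruning argument above then handles plans that built templates but never instantiated them. I would state this reading explicitly in the proof.
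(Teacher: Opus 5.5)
Your proof is correct and follows the same route as the paper's sketch. For the inequality you embed any canonical plan into $\mathcal{A}_{\ast}(w)$, and for the equality you take an optimal concatenation-only templated plan and restrict it to its wildcard-free objects. Your observation that concatenation can never remove a $\ast$, so no wildcard-containing object can be an ancestor of $w$, supplies the justification the paper leaves implicit for why this restriction is a valid canonical plan. It also shows that your argument only needs some optimal templated plan to avoid $R_{\mathrm{temp}}$.
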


\begin{proof}[Sketch]
Any canonical assembly plan in $\mathcal{A}(w)$ is also a valid templated assembly plan in $\mathcal{A}_{\ast}(w)$: the set of substrings $\mathrm{Sub}(w)$ is contained in $O_{\ast}$, and all canonical concatenation rules appear in $R_{\mathrm{concat}}$.
Hence $\TAI(w)$, as a minimum over a larger set of assembly plans, cannot exceed $\ASI(w)$.

If no templated rule in $R_{\mathrm{temp}}$ can be applied along any optimal templated assembly plan without violating the object constraint ($O_{\ast}$), then any optimal templated assembly plan uses only concatenation rules; restricting to substrings recovers a canonical assembly plan of the same length, and thus $\TAI(w)=\ASI(w)$.
\end{proof}

In practice, $\TAI(w)$ can be strictly smaller than $\ASI(w)$ for strings that contain repeated motifs embedded in non-local contexts describable by block-compressed templates.
We illustrate this effect in the next subsection.\\

\begin{remark}
More generally, let $O$ be a fixed object universe and let $R_1 \subseteq R_2$ be two sets of assembly rules on $O$.
For $i\in\{1,2\}$, let $\ASI_i(w)$ denote the minimal length of an assembly plan for $w$ when only rules from $R_i$ are allowed.
Then for every $w$ we have $\ASI_2(w) \le \ASI_1(w)$, since any $R_1$-plan is also a valid $R_2$-plan.
In particular, any more restrictive templated model obtained by removing some instantiation rules from $R_{\mathrm{temp}}$ would give rise to an index $\TAI'(w)$ satisfying
\[
  \TAI(w) \;\le\; \TAI'(w) \;\le\; \ASI(w).
\]
Thus the present $\TAI(w)$ can be seen as the smallest member of a family of canonical indices obtained by varying the strength of the templated rule set.
\end{remark}

\subsection{Illustrative example. Selective template mining and when wildcards help}

A key point of the present model is that templates are mined from the target string $w$:
the admissible templates are exactly the block-compressed templates in $\mathcal{T}(w)$.
The index $\TAI(w)$ is defined by minimising over all templated assembly plans in $\mathcal{A}_{\ast}(w)$.
Consequently, templates are used only if they are beneficial:
if no plan that employs a rule in $R_{\mathrm{temp}}$ improves upon the best purely concatenative plan, then an optimal plan simply refrains from using templates and we obtain $\TAI(w)=\ASI(w)$ (Proposition above).\\

To understand when the wildcard mechanism can reduce the number of steps, it is useful to separate two issues:
(i) the cost of assembling a template skeleton (counted via $R_{\mathrm{concat}}$), and
(ii) the cost of instantiating wildcard sites (counted via $R_{\mathrm{temp}}$).

\begin{itemize}
  \item \textbf{Single-use templates.}
  If a template $T$ is assembled and instantiated only once, then introducing it typically does not help:
  since $T$ is not a primitive object, one must pay (via $R_{\mathrm{concat}}$) for constructing its skeleton, and then pay an additional unit-cost instantiation step in $R_{\mathrm{temp}}$.
  In particular, when $\lvert P(T)\rvert = 1$ (i.e.\ $T$ has a single wildcard position), an instantiation step   provides no parallelism across multiple sites; any advantage over purely concatenative assembly is therefore   not automatic and depends on how the surrounding context is reused elsewhere in the plan.
  \item \textbf{Amortisation by reuse.}
  Templates can become beneficial once the same assembled skeleton is instantiated multiple times within an assembly plan (possibly with different fillers). In that case, the one-time cost of constructing $T$ can be amortised across several instantiations. This effect can already occur for $\lvert P(T)\rvert = 1$, but it   requires repeated use of the same contextual ``frame'' represented by $T$.
  \item \textbf{Parallel filling of several sites.}
  Even for a fixed target $w$, a single instantiation rule may fill several wildcard positions at once   (choose $S\subseteq P(T)$). Thus, templates with $\lvert P(T)\rvert\ge 2$ enable a genuinely new mechanism:
  one operation in $R_{\mathrm{temp}}$ can realise multiple occurrences of a motif simultaneously, which is   the main source of potential step savings relative to purely concatenative assembly.
\end{itemize}

As a schematic example, consider a string of the form
\[
w \;=\; P\;S,
\qquad
P = a_0\,u\,a_1\,u\,a_2\cdots a_{m-1}\,u\,a_m,
\qquad
S = a_0\,v\,a_1\,v\,a_2\cdots a_{m-1}\,v\,a_m,
\]
where $a_0,\dots,a_m\in\Sigma^{+}$ are fixed ``anchors'' and $u,v\in\Sigma^{+}$ are two motifs.
The template skeleton
\[
T \;=\; a_0\,\ast\,a_1\,\ast\,a_2\cdots a_{m-1}\,\ast\,a_m
\]
is an element of $\mathcal{T}(w)$ (it is mined from $w$ by compressing each occurrence of the motif block), and after constructing $T$ once, we may instantiate it twice to obtain $P$ and $S$.
When the anchor context is large and the number of wildcard sites is at least two, this may yield a shorter overall plan than assembling $P$ and $S$ separately by concatenation alone.
The worked example in the next subsection provides a concrete instance of this separation mechanism.\\
\subsection{A worked example: separating $\ASI$ and $\TAI$}
\label{subsec:worked-example-asi-tai}

Considering the string
\[
w = \texttt{11221122110001100110011}.
\]
we now present a concrete worked example over the alphabet $\Sigma = \{0,1,2\}$ which illustrates that the templated assembly index $\TAI(w)$ can be strictly smaller than the canonical assembly index $\ASI(w)$ for the same target string.

\paragraph{Canonical assembly plan (showing $ASI(w)=12$).}

In the canonical assembly space $\mathcal{A}(w)$, we restrict attention to substrings of $w$ and binary concatenation.
One possible assembly plan for $w$ proceeds as follows (each line is one unit-cost step):
\begin{align*}
\texttt{00} &= \texttt{0} + \texttt{0},\\
\texttt{11} &= \texttt{1} + \texttt{1},\\
\texttt{22} &= \texttt{2} + \texttt{2},\\
\texttt{1100} &= \texttt{11} + \texttt{00},\\
\texttt{2211} &= \texttt{22} + \texttt{11},\\
\texttt{22110} &= \texttt{2211} + \texttt{0},\\
\texttt{001100} &= \texttt{00} + \texttt{1100},\\
\texttt{110011} &= \texttt{1100} + \texttt{11},\\
\texttt{112211} &= \texttt{11} + \texttt{2211},\\
\texttt{11221122110} &= \texttt{112211} + \texttt{22110},\\
\texttt{001100110011} &= \texttt{001100} + \texttt{110011},\\
\texttt{11221122110001100110011} &= \texttt{11221122110} + \texttt{001100110011}.
\end{align*}
This gives a canonical assembly plan of length $12$. 
We have verified computationally that no canonical assembly plan
of length at most $11$ exists. Hence
\[
ASI(w) = 12.
\]

\paragraph{Templated assembly plan and $\TAI(w)$.}

For readability, we decompose the string $w$ into three contiguous blocks
\[
w = P \, M \, S,
\]
where
\[
P = \texttt{1122112211}, \qquad
M = \texttt{000}, \qquad
S = \texttt{1100110011}.
\]
Both $P$ and $S$ share the same ``anchor'' skeleton
\[
\texttt{11} \;\square\; \texttt{11} \;\square\; \texttt{11},
\]
in which the placeholder $\square$ is realised as \texttt{22} in $P$ and as \texttt{00} in $S$.
This is precisely the situation where block-compressed templates with a wildcard symbol $\ast$ capture a non-local form of modularity.
Therefore in the templated assembly space $\mathcal{A}_{\ast}(w)$ we may additionally use the wildcard symbol $\ast$ and the block-compressed template
\[
T = \texttt{11*11*11},
\]
which is obtained by block-compressing the substring \texttt{1100110011} of $w$ into three anchor blocks \texttt{11} separated by two wildcards.
Starting from monomers, we can build $T$ via the following steps:
\begin{align*}
\texttt{11} &= \texttt{1} \circ \texttt{1},\\
\texttt{11*} &= \texttt{11} \circ \ast,\\
\texttt{11*11*} &= \texttt{11*} \circ \texttt{11*},\\
\texttt{11*11*11} &= \texttt{11*11*} \circ \texttt{11}.
\end{align*}
We also build the motifs \texttt{22} and \texttt{00}:
\begin{align*}
\texttt{22} &= \texttt{2} \circ \texttt{2},\\
\texttt{00} &= \texttt{0} \circ \texttt{0}.
\end{align*}
Using the parallel substitution rule, we obtain the prefix and suffix of $w$ in two single steps:
\begin{align*}
P &= T[\ast \mapsto \texttt{22}] = \texttt{1122112211},\\
S &= T[\ast \mapsto \texttt{00}] = \texttt{1100110011}.
\end{align*}
We construct the middle block and final concatenations as
\begin{align*}
\texttt{000} &= \texttt{00} + \texttt{0},\\
\texttt{1122112211000} &= P + \texttt{000},\\
w &= \texttt{11221122110001100110011} = \texttt{1122112211000} + S.
\end{align*}

Thus the above described templated assembly plan can be written as a step-by-step derivation
(where each line is one unit-cost operation):
\begin{align*}
\texttt{11} &= \texttt{1} + \texttt{1},\\
\texttt{11*} &= \texttt{11} + \ast,\\
\texttt{11*11*} &= \texttt{11*} + \texttt{11*},\\
\texttt{11*11*11} &= \texttt{11*11*} + \texttt{11},\\
\texttt{22} &= \texttt{2} + \texttt{2},\\
\texttt{00} &= \texttt{0} + \texttt{0},\\
P &= \texttt{11*11*11}[\ast \mapsto \texttt{22}] = \texttt{1122112211},\\
S &= \texttt{11*11*11}[\ast \mapsto \texttt{00}] = \texttt{1100110011},\\
\texttt{000} &= \texttt{00} + \texttt{0},\\
\texttt{1122112211000} &= P + \texttt{000},\\
w &= \texttt{11221122110001100110011} = \texttt{1122112211000} + S.
\end{align*}

Counting composite steps, this templated assembly plan has length $11$: four steps to build $T$,
two steps to build \texttt{22} and \texttt{00}, two parallel substitutions to obtain $P$ and $S$,
one step for \texttt{000}, and two final concatenations.
Thus
\[
\TAI(w) \le 11.
\]
Moreover, for this instance the canonical plan given above has length $12$ and this bound is tight, so that
\[
\ASI(w)=12.
\]
Consequently, we obtain a strict separation for the same target string:
\[
\TAI(w) < \ASI(w).
\]

\subsection{Additional worked example I: separating ASI and TAI}
Considering the string
\[
w_1 = 10113121101011212111211,
\]
we present another worked example over the alphabet $\Sigma=\{0,1,2,3\}$ illustrating a strict separation between the canonical assembly index and the templated assembly index.

\paragraph{Canonical assembly plan (showing $\mathrm{ASI}(w_1)=13$).}
In the canonical assembly space $A(w_1)$ we restrict attention to substrings of $w_1$ and binary concatenation. One possible assembly plan for $w_1$ proceeds as follows (each line is one unit-cost step):
\begin{align*}
01 &= 0+1,\\
21 &= 2+1,\\
011 &= 01+1,\\
211 &= 21+1,\\
0113 &= 011+3,\\
1211 &= 1+211,\\
01121 &= 011+21,\\
10113 &= 1+0113,\\
121101 &= 1211+01,\\
2111211 &= 211+1211,\\
10113121101 &= 10113+121101,\\
011212111211 &= 01121+2111211,\\
10113121101011212111211 &= 10113121101+011212111211.
\end{align*}
This gives a canonical assembly plan of length $13$. We have verified computationally that no canonical assembly plan of length at most $12$ exists. Hence
\[
ASI(w_1) = 13.
\]

\paragraph{Templated assembly plan and $\mathrm{TAI}(w_1)$.}
For readability, factor $w_1$ into contiguous blocks
\[
w_1 = 1011\;3\;1211\;0\;1011\;2\;1211\;1211.
\]
The two frequently reused blocks $1011$ and $1211$ share the same anchor skeleton
\[
1\;\square\;11,
\]
where the placeholder $\square$ is realised as $0$ in $1011$ and as $2$ in $1211$.
Accordingly, in the templated assembly space $A_\ast(w_1)$ we use the wildcard symbol $\ast$ and the
block-compressed template
\[
T = 1\ast 11.
\]
A step-by-step templated derivation (each line is one unit-cost operation) is:
\begin{align*}
11 &= 1+1,\\
1\ast &= 1+\ast,\\
T=1\ast 11 &= 1\ast + 11,\\
1011 &= T[\ast\mapsto 0],\\
1211 &= T[\ast\mapsto 2],\\
10113 &= 1011+3,\\
101131211 &= 10113+1211,\\
1011312110 &= 101131211+0,\\
10113121101011 &= 1011312110+1011,\\
101131211010112 &= 10113121101011+2,\\
1011312110101121211 &= 101131211010112+1211,\\
w_1=10113121101011212111211 &= 1011312110101121211+1211.
\end{align*}
Counting steps, we have $3$ steps to build $T$, $2$ instantiation steps to obtain $1011$ and $1211$,
and $7$ concatenations to assemble the final word, for a total of $12$ steps. Thus
\[
\mathrm{TAI}(w_1)\le 12.
\]
With $\mathrm{ASI}(w_1)=13$ this yields a strict separation:
\[
\mathrm{TAI}(w_1)<\mathrm{ASI}(w_1).
\]

\subsection{Additional worked example II: separating ASI and TAI}
Considering the string
\[
w_2 = 101131211010112121112111011,
\]
we obtain an analogous separation.

\paragraph{Canonical assembly plan (showing $\mathrm{ASI}(w_2)=14$).}
One possible canonical assembly plan is (each line is one unit-cost step):
\begin{align*}
01 &= 0+1,\\
10 &= 1+0,\\
11 &= 1+1,\\
12 &= 1+2,\\
0101 &= 01+01,\\
1011 &= 10+11,\\
1211 &= 12+11,\\
10113 &= 1011+3,\\
121211 &= 12+1211,\\
12110101 &= 1211+0101,\\
12111011 &= 1211+1011,\\
1011312110101 &= 10113+12110101,\\
12121112111011 &= 121211+12111011,\\
101131211010112121112111011 &= 1011312110101+12121112111011.
\end{align*}
This gives a canonical assembly plan of length $14$. We have verified computationally that no canonical assembly plan of length at most $13$ exists. Hence
\[
ASI(w_2) = 14.
\]

\paragraph{Templated assembly plan and $\mathrm{TAI}(w_2)$.}
Factor
\[
w_2 = 1011\;3\;1211\;0\;1011\;2\;1211\;1211\;1011.
\]
As before, use the skeleton $1\square 11$ captured by the template $T=1\ast 11$.
A step-by-step templated derivation is:
\begin{align*}
11 &= 1+1,\\
1\ast &= 1+\ast,\\
T=1\ast 11 &= 1\ast + 11,\\
1011 &= T[\ast\mapsto 0],\\
1211 &= T[\ast\mapsto 2],\\
10113 &= 1011+3,\\
101131211 &= 10113+1211,\\
1011312110 &= 101131211+0,\\
10113121101011 &= 1011312110+1011,\\
101131211010112 &= 10113121101011+2,\\
1011312110101121211 &= 101131211010112+1211,\\
10113121101011212111211 &= 1011312110101121211+1211,\\
w_2=101131211010112121112111011 &= 10113121101011212111211+1011.
\end{align*}
Counting steps: $3$ to build $T$, $2$ instantiations, and $8$ concatenations (for $9$ blocks), totalling $13$.
Hence
\[
\mathrm{TAI}(w_2)\le 13,
\]
and with $\mathrm{ASI}(w_2)=14$ we again obtain
\[
\mathrm{TAI}(w_2)<\mathrm{ASI}(w_2).
\]

\section{Relation to grammars and pattern-based formalisms}

The introduction of wildcard templates in $\mathcal{A}_{\ast}(w)$ connects templated Assembly Theory to established formalisms in theoretical computer science and formal language theory.

First, the canonical string-based assembly index is closely related to the smallest grammar problem: given a string $w$, find a smallest context-free grammar (or straight-line program) that generates exactly $w$.
In ~\cite{masierak_computational_2025} this correspondence is made precise, and NP-completeness of computing $\ASI(w)$ is established.

Templated assembly extends this picture in the direction of:
macro grammars and grammars with parameters, where non-terminals can take arguments and effectively encode templates with holes; pattern languages, where patterns are strings over an alphabet of terminals and variables, and variables can be instantiated by arbitrary strings; and insertion systems, where strings grow by insertion of substrings into designated positions.

Our block-compressed templates correspond to patterns whose literal segments are constrained to appear in the same order as in $w$, and whose variables (wildcards) represent compressible blocks between these anchors.

The templated assembly index $\TAI(w)$ can thus be viewed as an operational measure of how economically $w$ can be generated by a restricted class of pattern-based constructions that combine concatenation with a limited form of parameterised copying.
This perspective suggests both algorithmic strategies (e.g.\ approximate $\TAI(w)$ via grammar-based compression with macro rules) and complexity-theoretic results, to which we now turn.

\subsection{A greedy macro-grammar heuristic for approximating $\TAI(w)$}
The connection to macro grammars suggests practical heuristics for obtaining computable upper bounds on $\TAI(w)$.
Informally, one may treat a frequently instantiable pattern as a ``macro rule'' whose body is a template $T\in T(w)$ (the skeleton) and whose argument is the filler block $u$ used to instantiate wildcard sites via
$R_{\mathrm{temp}}$.

\paragraph{Candidate templates and filler families.}
Fix a template $T\in T(w)$.
For each $u\in O^{\ast}$ such that the fully instantiated string
\[
x_u := T[P(T)\mapsto u]
\]
is a substring of $w$ (hence $x_u\in \mathrm{Sub}(w)\subseteq O^{\ast}$), let $\mathrm{occ}(x_u)$ denote the number of occurrences
of $x_u$ in $w$ (or in the current working representation of $w$ maintained by the heuristic).
Instead of selecting a single pair $(T,u)$, it is natural to select one skeleton $T$ together with a family of fillers
$U\subseteq O^{\ast}$ for which $x_u\in \mathrm{Sub}(w)$.
This reflects the fact that the cost of constructing the skeleton $T$ (via $R_{\mathrm{concat}}$) can be amortised over multiple instantiations, possibly with different fillers.
In practice, it is often beneficial to restrict to templates with at least two wildcard sites, $|P(T)|\ge 2$, since a single instantiation step can then fill several sites at once (choose $S=P(T)$).

\paragraph{A simple gain score.}
To decide whether committing to a skeleton $T$ and a filler family $U$ is likely to reduce the number of steps, we compare:
(i) assembling all occurrences of the instantiated substrings $x_u$ ``directly'' in the canonical model, and
(ii) assembling the skeleton $T$ once (via $R_{\mathrm{concat}}$), assembling each filler $u\in U$ once, and then producing
each occurrence of $x_u$ by one instantiation in $R_{\mathrm{temp}}$.
Let $c(\cdot)$ be any efficiently computable cost proxy for canonical (concatenation-only) assembly complexity,
intended to approximate (or upper bound) $\ASI(\cdot)$ by taking classical reuse into account.
In the simplest baseline one may take $c(y):=|y|-1$.
Using such a proxy, define
\[
\mathrm{gain}(T,U):=\sum_{u\in U}\mathrm{occ}(x_u)\cdot c(x_u)
-\Bigl(c(T)+\sum_{u\in U}c(u)+\sum_{u\in U}\mathrm{occ}(x_u)\Bigr),
\qquad x_u:=T[P(T)\mapsto u].
\]
A positive score indicates that (under this proxy) it is worthwhile to pay for constructing $T$ (once) and the fillers in $U$
in order to amortise them over instantiations.
In a greedy implementation, one typically accepts only candidates with $\mathrm{gain}(T,U)>0$ and uses the magnitude of $\mathrm{gain}(T,U)$ as a priority score (larger gain indicates a more promising template under the chosen proxy).
Conversely, $\mathrm{gain}(T,U)<0$ suggests that the template is unlikely to be beneficial in this approximation, although one cannot exclude indirect benefits through later reuse effects in the full $\TAI$ model.

\paragraph{A sketch of a greedy heuristic.}
Starting from the explicit target word $w$, iterate:
\begin{enumerate}
  \item Mine candidate skeletons $T\in T(w)$ together with admissible fillers $u$ for which $x_u=T[P(T)\mapsto u]\in \mathrm{Sub}(w)$,
        and compute $\mathrm{occ}(x_u)$ in the current working representation.
  \item For each $T$, choose a filler family $U$ (e.g.\ all admissible fillers, or the best-scoring subset) and compute $\mathrm{gain}(T,U)$.
  \item Select a skeleton $T$ with a filler family $U$ of maximum positive gain.
  \item Commit to constructing $T$ once (using $R_{\mathrm{concat}}$), constructing each $u\in U$ once, and generating each chosen
        occurrence of $x_u$ via a single rule application in $R_{\mathrm{temp}}$ (fully parallel substitution, $S=P(T)$).
  \item ``Compress'' the working representation of $w$ by treating each generated occurrence of each $x_u$ as an available object,
        and continue searching for further profitable candidates on the remaining structure.
\end{enumerate}
The resulting set of committed constructions and instantiations can be translated directly into a templated assembly plan in $A^{\ast}(w)$, and therefore yields an explicit computable upper bound on $\TAI(w)$.
We leave a detailed experimental evaluation of such heuristics for future work.

\medskip
\noindent
This viewpoint is consistent with Example~3.2: the skeleton $T=\texttt{11*11*11}$ is instantiated in $w$ with two different fillers,
$u=\texttt{22}$ and $u=\texttt{00}$, producing $P=T[P(T)\mapsto \texttt{22}]$ and $S=T[P(T)\mapsto \texttt{00}]$.
For the baseline choice $c(y)=|y|-1$, we have $|T|=8$ and thus $c(T)=7$, while $|\texttt{22}|=|\texttt{00}|=2$ and hence
$c(\texttt{22})=c(\texttt{00})=1$.
Moreover, $|P|=|S|=10$, so $c(P)=c(S)=9$, and in this example $\mathrm{occ}(P)=\mathrm{occ}(S)=1$.
Taking $U=\{\texttt{22},\texttt{00}\}$, the aggregate score becomes
\[
\mathrm{gain}(T,U) = (1\cdot 9 + 1\cdot 9) - \bigl(7 + (1+1) + (1+1)\bigr) = 18 - 11 = 7,
\]
which is positive precisely because the one-time cost of constructing the skeleton $T$ is amortised across two instantiations with different fillers.
Accordingly, a gain-based greedy strategy would naturally commit to assembling the single skeleton $T$ once and then instantiating it for each filler, capturing the step savings exhibited by the templated assembly plan in that example.

paragraph{Gain calculations for the additional worked examples I and II.}
We reuse the baseline proxy $c(y):=|y|-1$.

\medskip\noindent {Example I ($w_1=10113121101011212111211$).}
Take the skeleton $T=1\ast 11$ and the filler family $U=\{0,2\}$, which produces
\[
x_0 = T[\ast\mapsto 0]=1011,\qquad x_2 = T[\ast\mapsto 2]=1211.
\]
Here $|T|=4$, so $c(T)=3$, and $|0|=|2|=1$, hence $c(0)=c(2)=0$.
Moreover $|x_0|=|x_2|=4$, so $c(x_0)=c(x_2)=3$.
In the block factorisation
\[
w_1 = 1011\;3\;1211\;0\;1011\;2\;1211\;1211
\]
we have $\mathrm{occ}(x_0)=2$ and $\mathrm{occ}(x_2)=3$. Therefore
\begin{align*}
\mathrm{gain}(T,U)
&= \bigl(2\cdot c(x_0)+3\cdot c(x_2)\bigr)
   - \Bigl(c(T)+\bigl(c(0)+c(2)\bigr)+\bigl(\mathrm{occ}(x_0)+\mathrm{occ}(x_2)\bigr)\Bigr)\\
&= (2\cdot 3+3\cdot 3)-\bigl(3+(0+0)+(2+3)\bigr)\\
&= 15-8 = 7.
\end{align*}

\medskip\noindent{Example II ($w_2=101131211010112121112111011$).}
With the same $T=1\ast 11$ and $U=\{0,2\}$, we still have
$c(T)=3$, $c(0)=c(2)=0$, and $c(1011)=c(1211)=3$.
Using the factorisation
\[
w_2 = 1011\;3\;1211\;0\;1011\;2\;1211\;1211\;1011,
\]
we obtain $\mathrm{occ}(1011)=3$ and $\mathrm{occ}(1211)=3$, hence
\begin{align*}
\mathrm{gain}(T,U)
&= (3\cdot 3+3\cdot 3)-\bigl(3+(0+0)+(3+3)\bigr)\\
&= 18-9 = 9.
\end{align*}

\section{Computational complexity}
\label{sec:complexity}

We consider the decision problem associated with the templated assembly index $\TAI(w)$.

\begin{definition}
The decision problem \textsc{TAI-DEC} is defined as follows:
\begin{itemize}
    \item Instance: a string $w\in\Sigma^{+}$ and a non-negative integer $k$ (given in unary or binary).
    \item Question: does there exist a templated assembly plan for $w$ of total cost at most $k$ in the templated assembly space
    $\mathcal{A}_{\ast}(w)$?
\end{itemize}
\end{definition}

\begin{lemma}
\label{lem:tai-in-np}
\textsc{TAI-DEC} belongs to $\mathrm{NP}$.
\end{lemma}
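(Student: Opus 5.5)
The plan is the standard guess-and-verify argument: a certificate is a templated assembly plan of length at most $k$, and we check it in polynomial time. The difficulty is that $k$ may be given in binary and be huge, and that a plan may contain monomers and redundant steps. So the first step is to normalise plans so that the certificate has size polynomial in $|w|$, independently of $k$.

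\textbf{Step 1 (bounding the plan).} We may assume an optimal plan never produces the same object twice and never produces an object that is not later used; deleting such steps does not increase the cost. Every composite object is a substring of $w$ or a template in $\mathcal{T}(w)$, so it has length at most $|w|$. Moreover, $w$ has a trivial concatenation-only plan of $|w|-1$ steps, namely left-to-right letter concatenation, whose prefixes all lie in $\mathrm{Sub}(w)$. Hence $\TAI(w)\le |w|-1$.

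If $k\ge |w|-1$ the answer is trivially yes and the verifier accepts outright. Otherwise $k<|w|$, and we need only consider plans with at most $k<|w|$ composite steps. Monomers add at most $|\Sigma|+1$ further entries. The certificate is therefore the list $x_1,\dots,x_m$ with $m=O(|w|+|\Sigma|)$, each $x_i$ a string of length at most $|w|$. Each composite entry is annotated with its justification: indices $j,k'<i$ and the rule type, plus, for an instantiation, the subset $S\subseteq P(T)$ given as a bit vector of length at most $|w|$. The total certificate size is $O(|w|^2\log|w|)$ bits.

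\textbf{Step 2 (verification).} For each entry the verifier checks three things.
\begin{enumerate}[label=(\roman*)]
\item Monomer entries are single symbols of $\Sigma\cup\{\ast\}$.
\item A concatenation entry satisfies $x_i=x_jx_{k'}$ and $x_i\in O_\ast$.
\item An instantiation entry has $x_j$ a template, $S$ a nonempty subset of the wildcard positions of $x_j$, $x_i=x_j[S\mapsto x_{k'}]$, and $x_i\in O_\ast$.
\end{enumerate}
It finally checks that some $x_i=w$ and that the number of composite entries is at most $k$. String equality and substitution take polynomial time. Membership in $\mathrm{Sub}(w)$ is a direct substring search.

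\textbf{Step 3 (main obstacle: deciding $x\in\mathcal{T}(w)$ in polynomial time).} Given $x\in(\Sigma\cup\{\ast\})^+$ containing at least one $\ast$ and at least one letter, we must decide whether some substring $u=w[i..j]$ admits a block partition that compresses to $x$. Since adjacent wildcard-blocks could merge, we note that a run of consecutive $\ast$'s in $x$ corresponds to consecutive nonempty blocks. So $x$ is a template iff there exist $i$ and an alignment of $x$ to $w$ starting at $i$ in which each maximal literal segment of $x$ matches exactly, each $\ast$ matches a nonempty factor, and the literal segments appear in order.

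Iterating over $i$, we run a DP over positions: a reachable state is a pair (position in $x$, position in $w$), letters advance both by one on equality, and $\ast$ consumes one or more characters of $w$. This is $O(|x||w|^2)$ per start, hence polynomial overall. The condition that $F$ is neither all zeros nor all ones is guaranteed by $x$ containing both kinds of symbols. I expect the care needed here to lie in correctly matching the definition's block semantics, for example that consecutive wildcards each need nonempty blocks, rather than in any deep difficulty.

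With Steps 1–3, the verifier runs in time polynomial in $|w|+\log k$, so \textsc{TAI-DEC} is in NP.
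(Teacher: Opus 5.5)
Your proposal is correct and follows the same guess-and-verify route as the paper's sketch: the certificate is the plan with its intermediate objects, and the verifier replays each concatenation or instantiation and checks membership in $O_{\ast}=\mathrm{Sub}(w)\cup\mathcal{T}(w)$. Your version is more careful on two points the paper glosses over: the cap $\TAI(w)\le |w|-1$ keeps the certificate polynomial even when $k$ is given in binary (the paper only claims a bound polynomial in $|w|+k$), and your alignment DP explicitly shows that membership in $\mathcal{T}(w)$ is decidable in polynomial time.
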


\begin{proof}[Sketch]
A certificate is a sequence of rule applications (concatenations and template instantiations), together with the intermediate objects they produce.
Since every step strictly increases the length of at least one constructed object, the length of a valid plan is polynomially bounded in $|w|+k$.
Verification proceeds by simulating each operation and checking that the produced object belongs to $O_{\ast}=\mathrm{Sub}(w)\cup \mathcal{T}(w)$.
All checks can be carried out in time polynomial in $|w|+k$, hence \textsc{TAI-DEC} is in $\mathrm{NP}$.
\end{proof}

\paragraph{On hardness.}
The canonical decision problem
\[
\textsc{ASI-DEC}=\{\, (w,k)\mid \ASI(w)\le k\,\}
\]
is NP-complete for fixed finite alphabets~\cite{masierak_computational_2025}.
Since $\TAI(w)\le \ASI(w)$ for all $w$, the templated model is a genuine extension of the canonical one.
However, this extension goes in the easier direction (it potentially shortens optimal plans), so NP-hardness of \textsc{TAI-DEC} does not follow immediately from NP-hardness of \textsc{ASI-DEC}.

One can nevertheless view \textsc{ASI-DEC} as the restriction of \textsc{TAI-DEC} to assembly plans that never apply a rule in $R_{\mathrm{temp}}$.
Equivalently, under the promise that $w$ admits no beneficial template instantiation (so that $\TAI(w)=\ASI(w)$), the problems \textsc{TAI-DEC} and \textsc{ASI-DEC} coincide.

This observation does not, however, settle the unconditional worst-case complexity of \textsc{TAI-DEC}.
Indeed, once beneficial template instantiations are allowed, deciding whether $\TAI(w)\le k$ requires
simultaneously reasoning about the classical reuse-vs.-concatenation trade-off and about whether
it is worthwhile to invest concatenation steps into constructing templates in order to amortize them
through one or more instantiation steps.
For this reason, we strongly conjecture that \textsc{TAI-DEC} remains NP-hard (already over fixed alphabets): the templated model preserves the combinatorial ``hard core'' of canonical assembly planning, while adding an additional layer of discrete structure selection (identifying and exploiting profitable templates).
While this is not a proof, it provides compelling evidence that intractability should persist in the worst case.
Establishing unconditional NP-hardness (or polynomial-time solvability) of \textsc{TAI-DEC} for fixed alphabets remains an interesting open question.\\

\section{Discussion}

We have proposed an extension of the canonical string-based formulation of Assembly Theory in which intermediate objects may contain a wildcard symbol $\ast$ representing compressed blocks, and templates are restricted to block-compressed substrings of the target string.
This leads to the templated assembly index $\TAI(w)$, a new complexity measure that generalises the canonical assembly index $\ASI(w)$ and is sensitive to templated modularity: the presence of modules that recur in multiple, possibly distant contexts linked by a common scaffold.\\

Formally, we defined templated assembly spaces, specified concatenation and template instantiation rules, and established basic relationships between $\TAI(w)$ and $\ASI(w)$.
We established membership in NP and clarified why NP-hardness does not follow from a trivial embedding argument.
We conjecture that \textsc{TAI-DEC} is NP-hard (already for fixed alphabets), while leaving unconditional NP-hardness open.
We also outlined connections to grammar-based compression and to formalisms such as pattern languages.\\

Some domains where $\TAI$ and its future variants may offer novel insights are:
\begin{itemize}
  \item Chemistry and molecular design: Many molecules can be described as relatively rigid scaffolds decorated by repeated functional groups: aromatic cores bearing several identical substituents, oligomers with the same side chain attached at multiple positions, or dendritic architectures built from a small set of branching units. In a string-based representation of molecules (for example a linearised encoding of a core with substituents), such patterns correspond to templates where wildcard positions mark sites of functionalisation. Canonical assembly indices quantify reuse of repeated fragments, but they do not distinguish between independent repetitions and repetitions coordinated by a shared backbone. In contrast, templated assembly makes this distinction explicit: a common scaffold with wildcard sites is built once, and then instantiated with chosen functional groups. Molecules or families of molecules with a large gap $\ASI(w) - \TAI(w)$ would then correspond to structures where much of the apparent combinatorial complexity comes from templated substitution patterns rather than from genuinely unrelated submotifs, suggesting applications in quantifying scaffold--decoration modularity in medicinal chemistry, combinatorial library design, and polymer or supramolecular engineering.
  \item Sequence analysis and genomics: Many genomic regions exhibit modular architectures, with recurring motifs (e.g.\ transcription factor binding sites, protein domains) embedded in diverse local contexts. Canonical assembly indices already provide a measure of hierarchical reuse, but templated indices could more directly reflect shared scaffolds with multiple occurrences of the same motif. Differential behaviour of $\ASI(w)$ and $\TAI(w)$ across genomic regions might highlight functionally relevant templated structures.
  \item Biosignatures and origins of life: Assembly Theory has been proposed as a framework for defining universal biosignatures based on complexity measures derived from putative assembly paces~\cite{cronin2017,cronin2022,cronin2023}. The templated assembly index offers a complementary axis: high $\TAI(w)$ together with a large gap $\ASI(w)-\TAI(w)$ might signal systems that exploit templated copying of modules in ways characteristic of evolved biological organisation, for example gene families under common regulatory architectures.\\
\end{itemize}

Future work may include a more detailed classification of the gap $\ASI(w)-\TAI(w)$ for various families of strings, refined complexity and approximation results, and empirical studies of $\TAI$ on biological and artificial datasets.
On the conceptual side, templated Assembly Theory suggests a broader programme: to systematically explore how different classes of templates and rewriting operations give rise to distinct families of assembly indices, each probing a different facet of structure and history in complex systems.

\bibliographystyle{vancouver}
\bibliography{main}

\end{document}